
\documentclass[conference]{IEEEtran}
%


%

%
\usepackage{cite}

%
\ifCLASSINFOpdf
 \usepackage[pdftex]{graphicx}
 \usepackage{subfigure}
\else
\fi
%
%

%
\usepackage{amsmath}
%
\usepackage{amssymb}
\usepackage{amsthm}

\usepackage{float}

\newtheorem{lemma}{\textbf{Lemma}}



\hyphenation{op-tical net-works semi-conduc-tor}
\allowdisplaybreaks
\abovedisplayskip=.03in \belowdisplayskip=.03in

\begin{document}
%
\title{Age-of-Information in the Presence of Error}

\author{\IEEEauthorblockN{Kun Chen}
\IEEEauthorblockA{IIIS, Tsinghua University\\
chenkun14@mails.tsinghua.edu.cn}
\and
\IEEEauthorblockN{Longbo Huang}
\IEEEauthorblockA{IIIS, Tsinghua University\\
longbohuang@tsinghua.edu.cn}
 
}


%


\maketitle

\begin{abstract}
We consider the  peak age-of-information (PAoI) in an $M/M/1$ queueing  system with packet delivery error, i.e., update packets can get lost during transmissions to their destination. We focus on two types of policies, one is to adopt Last-Come-First-Served (LCFS) scheduling, and the other is to utilize retransmissions, i.e., keep transmitting the most recent packet. Both policies can effectively avoid the queueing delay of a busy channel and ensure a small PAoI. 
Exact PAoI expressions under both  policies with different error probabilities are derived, including First-Come-First-Served (FCFS), LCFS with preemptive priority, LCFS with non-preemptive priority, Retransmission with preemptive priority, and Retransmission with non-preemptive priority. Numerical results obtained from analysis and simulation are presented to validate our results.
\end{abstract}


%
\IEEEpeerreviewmaketitle

\section{Introduction} 
Many information systems work in such a mode that status updates are first collected from a time-varying environment, and then control decisions are made based on these information. 
Examples include sensor networks for large-scale monitoring \cite{corke}, vehicular networks where vehicle  position and velocity information are disseminated to assist safe and intelligent transportation \cite{papa}, and  wireless networks where scheduling is carried out based on channel state information \cite{reddy}. A key to these systems is to \emph{ensure timely delivery of  status updates}, since out-of-date information can lead to incorrect system status estimation and result in severe performance loss. 

\emph{Age-of-information} (AoI), first proposed in \cite{kaul}, provides a measure for the ``freshness" of the current status information, and is an important metric for measuring quality-of-service (QoS) of a system. 
Different from typical performance metrics such as delay or throughput, AoI jointly captures the latency in transmitting updates and the rate at which they are delivered. 

There have been various recent works on understanding AoI. 
\cite{kaul} analyzes AoI for queueing models including $M/M/1$, $M/D/1$ and $D/M/1$. A more complicated case with multiple update sources is analyzed in \cite{yates}.
\cite{kaul2} studies AoI in a Last-Come-First-Served (LCFS) $M/M/1$ queueing system with or without preemption.
The case when the destination may receive out-of-order packets is considered in \cite{kam}. 
In \cite{costa}, the authors introduce a notion \emph{peak age-of-information} (PAoI) and consider systems with packet management, i.e., the queue can choose to only keep a subset of update packets.  
%
AoI in a multi-class $M/G/1$ queueing  system  is studied in \cite{huang}. In \cite{guo}, the authors study optimal update scheduling in a discrete-time multi-source system. The optimal update generating policy is explored in \cite{sun}. 

We notice that one common assumption made in most aforementioned works is that update packet delivery is always perfect, and AoI has been investigated mostly under the First-Come-First-Served (FCFS) principle. An exception is \cite{kaul2}, which studies AoI under the LCFS principle, but also assumes perfect packet delivery.
However, in practical systems, packet transmissions often contain errors and losses, e.g., due to interference or buffer overflow at intermediate routers in a multi-hop network.
%
To study the impact of such delivery errors on AoI,  in this paper, we focus on an $M/M/1$ queueing model where each packet, upon service completion, arrives at the destination with a nonzero probability. Our model captures (i) the queueing effect, which approximates the process where update packets are sent over a channel or a network and can cause congestion (This is different from \cite{guo}, which also considers transmission errors), and (ii) the  error component, which models the fact that update packets can get lost during the delivery process.  

We first focus on the LCFS service principle and derive the exact PAoI for both the systems with preemptive priority and non-preemptive priority. Intuitively, LCFS is good for two reasons. (i) Compared to packet management schemes, e.g, \cite{costa}, LCFS similarly avoids delaying new update packets with queueing by letting them go first. This results in significant reduction of AoI compared to FCFS, especially when the channel utilization is high. (ii) When there are errors in packet transmissions, packet management schemes can suffer severely due to the lack of updates to deliver, while LCFS still ensures a good delivery rate and does not affect AoI significantly.
%

Next we analyze  the PAoI under retransmission schemes. Here we do not assume feedback, since retransmissions based on feedback may suffer from waste of time waiting for feedback, or interference between update packets and feedback information. Thus, the Retransmission policies refer to keep transmitting the most recent packet repeatedly until a new packet arrives. 
Compared to LCFS, retransmission policies have an advantage of 
always transmitting the most recent updates,  at the cost of additional packet state management.  We also derive the exact PAoI expressions for retransmission with or without preemption. 

In this work LCFS and Retransmission policies are both studied to cover various scenarios. Although utilizing retransmissions is expected to contribute to a small AoI, it does not apply to scenarios where transmissions are not guaranteed, e.g., UDP and some wireless sensor networks. 
The rest of the paper is organized as follows. In Section \ref{section:model} we introduce the model. 
In Sections \ref{section:fcfs}, \ref{section:preemptive}, \ref{section:nonpreemptive}, \ref{section:retran-preemptive} and \ref{section:retran-nonpreemptive} we present our analysis for the FCFS, LCFS preemptive, LCFS non-preemptive, Retransmission preemptive and Retransmission non-preemptive cases. In Section \ref{section:numerical} we present numerical results. We conclude the paper in Section \ref{section:conclusion}.

\section{System Model}\label{section:model}
We consider a system where  a source transmits updates (packets) to a remote destination through a queue. The source generates packets according to a Poisson process with rate $\lambda$.  The service time for each packet is exponentially distributed with service rate $\mu$. 

Different from previous works, we assume that upon service completion, each packet arrives at the destination 
independently with probability $p\in[0, 1]$. Such a system is modeled by an $M/M/1$ queueing system with packet loss, as shown in Fig. \ref{fig_LCFS}. 
The packet loss model captures real-world situations where update packets can get lost during delivery to their destination, e.g., interference or buffer overflow, and has not yet been studied. 
\begin{figure}[ht] 
\centering
\vspace{-.1in}
\includegraphics[width=2.0in]{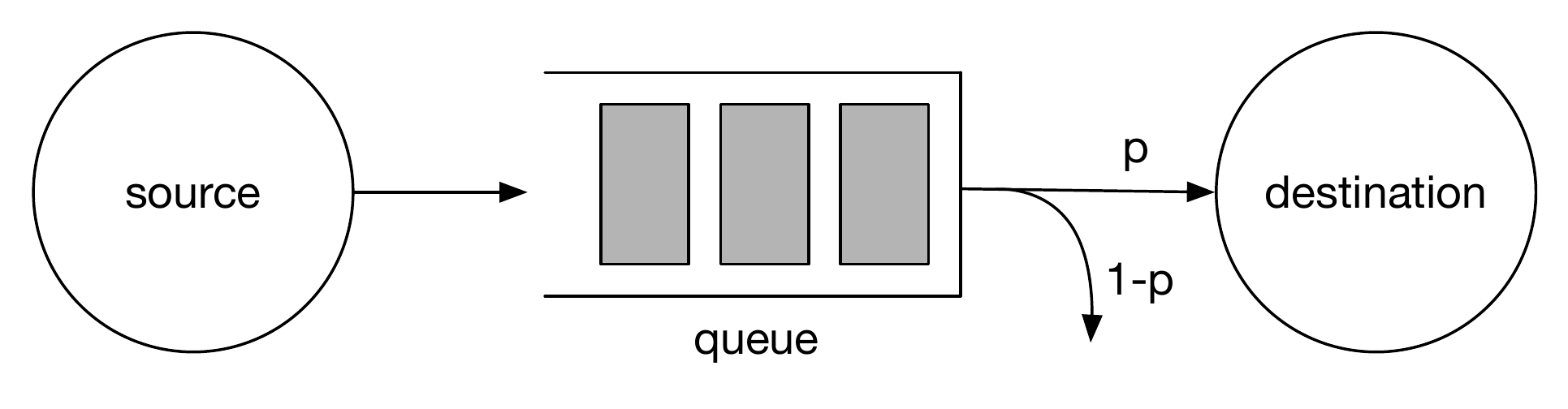}
\vspace{-.1in}
\caption{The packet delivery process in a queueing  system with packet loss.}
\label{fig_LCFS}
\end{figure}

We study the peak age-of-information (PAoI) in this system, which is defined as follows. 
Suppose each update packet has a time-stamp, marking its generation time. Denote the  time-stamp of the most recently received update at time $t$ as $\delta(t)$. Then,  the status age is defined as \cite{kaul}: 
$$\Delta(t)\triangleq t-\delta(t),$$ 
and the set of peak ages is defined as:   
$$\{\Delta(t_{i})|\exists \epsilon>0 \text{ s.t. } \forall t\in (t_{i}-\epsilon)\cup(t_{i}+\epsilon), \Delta(t)<\Delta(t_{i})\}.$$
Then, PAoI \cite{costa} is defined to be:   
$$A_P\triangleq \lim_{I\to\infty}\frac{1}{I}\sum_{i=1}^I\Delta_i=\mathbb{E}\{\Delta_i\},$$ 
where $\Delta_i=\Delta(t_i)$ is the $i$-th peak of $\Delta(t)$ (See Fig. \ref{fig:paoi}). The last equality follows from the ergodicity of $\Delta_i$. As shown in \cite{huang}, PAoI is closely related to the average AoI, but  is much more tractable. 

We first introduce some useful definitions. 
Denote $N$ the set of all packets, according to the order in which they arrive. 
For a packet $n$, denote $a(n)$ its arrival time, $d(n)$  its departure time, and $u(n)$ the time it starts to receive service. 
Let $\Phi$ denote the set of all successfully transmitted packets. Under the LCFS service discipline, a successfully transmitted packet may be outdated when arriving at the destination. Thus,  we further define the set of \emph{informative} packets  $\Psi$ as: 
\begin{eqnarray*}
\Psi \triangleq \{n\in \Phi \,|\,d(n)-a(n)<\Delta(d(n))\}. 
\end{eqnarray*} 
That is,  $\Psi=\{n_1,n_2,\dots,n_i,\dots\}$ contains the packets which offer new information (so the system age decreases) when they reach the destination.

Regarding the evolution of the system, we define the following random variables:
\begin{align*}
X_n&\triangleq a(n+1)-a(n),\\
W_n&\triangleq u(n)-a(n), \\
S_n&\triangleq d(n)-u(n),
\end{align*}
i.e., $X_{n}$ is the inter-arrival time between $n$ and $n+1$; $W_{n}$ is the waiting time of $n$; $S_{n}$ is  the ``service time'' of $n$. Note that in the LCFS with preemptive priority case, $S_{n}$ may include service time of later packets if $n$ is preempted by other packets.
%

\section{PAoI under FCFS}\label{section:fcfs}

For the basic First-Come-First-Served (FCFS) case with packet loss, PAoI can be easily obtained. 
Define the first informative packet which arrives no earlier that $n$ as 
\begin{eqnarray*}
\alpha(n)\triangleq\min\{n_i|n_i\in\Psi,\ a(n_i)\ge a(n)\}.
\end{eqnarray*}
Moreover, define the inter-arrival time between $n$ and $\alpha(n)$ as
\begin{eqnarray*}
\hat{X}_n\triangleq a(\alpha(n))-a(n),
\end{eqnarray*}
and define $\hat{S}_n$ as the time duration from the moment $n$ starts to receive service to the moment $\alpha(n)$ departs, i.e., 
\begin{eqnarray*}
\hat{S}_n\triangleq d(\alpha(n))-u(n).
\end{eqnarray*}
Note that if $n\in\Psi$, we have $\alpha(n)=n$, $\hat{X}_n=0$ and $\hat{S}_n=S_{n}$.
Since $\Phi=\Psi$ under FCFS, PAoI is composed of the (expected) inter-arrival time of two successfully transmitted packets, plus the time a packet spends in the system. Thus,
\begin{align*}
A_P^{FCFS}&=\mathbb{E}\{X_{n_i}+\hat{X}_{n_i+1}+W_{n_{i+1}}+S_{n_{i+1}}|n_i,n_{i+1} \in \Phi\},\\
&=\frac{1}{\lambda}+\mathbb{E}\{\hat{X}_{n_i+1}\}+\frac{1}{\mu-\lambda},
\end{align*}
where $n_i+1$ is $n_i$'s next packet and $n_{i+1}$ is $n_i$'s next packet in $\Phi$. Moreover, 
\begin{eqnarray*}
\mathbb{E}\{\hat{X}_{n_i+1}\}&=&p\mathbb{E}\{\hat{X}_{n_i+1}|n_i+1\in\Phi\}+\\
&&\quad(1-p)\mathbb{E}\{X_{n_i+1}+\hat{X}_{n_i+2}|n_i+1\notin\Phi\}\\
&=&0+(1-p)(\frac{1}{\lambda}+\mathbb{E}\{\hat{X}_{n_i+1}\}),
\end{eqnarray*}
where we have used $\mathbb{E}\{\hat{X}_{n_i+1}\}=\mathbb{E}\{\hat{X}_{n_i+2}\}$. Thus, 
\begin{eqnarray*}
\mathbb{E}\{\hat{X}_{n_i+1}\}&=&\frac{1-p}{p\lambda},
\end{eqnarray*}
which implies: 
\begin{eqnarray}
A_P^{FCFS}=\frac{1}{p\lambda}+\frac{1}{\mu-\lambda}. \label{paoi_fcfs}
\end{eqnarray}
However, the FCFS policy, as discussed above, can suffer from   traffic congestion, under which each packet will take a long time to get through the queue and the PAoI can be poor. 
Thus, in this work, we focus on the  Last-Come-First-Served (LCFS) as well as Retransmission policies and consider the  following two scheduling schemes. 
\begin{enumerate}
\item  Preemptive priority: If a new packet arrives while the server is busy, it preempts the current packet and starts service immediately.
\item Non-preemptive priority: The server always completes the current packet and then starts serving the most recent packet in the queue. 
\end{enumerate}
The reasons to focus on PAoI with LCFS  are as follows: (i) Intuitively, letting later packets go earlier should make the status at the destination fresher. Hence, the PAoI will be much smaller. (ii) Compared to packet management schemes where packets are often dropped for queue size reduction, e.g.,  \cite{costa}, LCFS still transmits all packets. Thus, in the case when errors can occur and packets can get lost, LCFS ensures that the destination still gets updates more regularly, maintaining a lower level of PAoI. 
Note that characterizing PAoI under LCFS, even with perfect delivery, is nontrivial and has not been studied before, especially for the non-preemptive case. 

We also analyze the PAoI under Retransmission policies, which have an advantage over LCFS, as they always deliver the most latest information. On the other hand, retransmission requires additional packet state management. 


\section{PAoI under LCFS with Preemptive Priority}\label{section:preemptive}


\begin{figure}
  \centering
  \subfigure[Preemptive]{
    \label{fig_preem} 
    \includegraphics[width=1.65in]{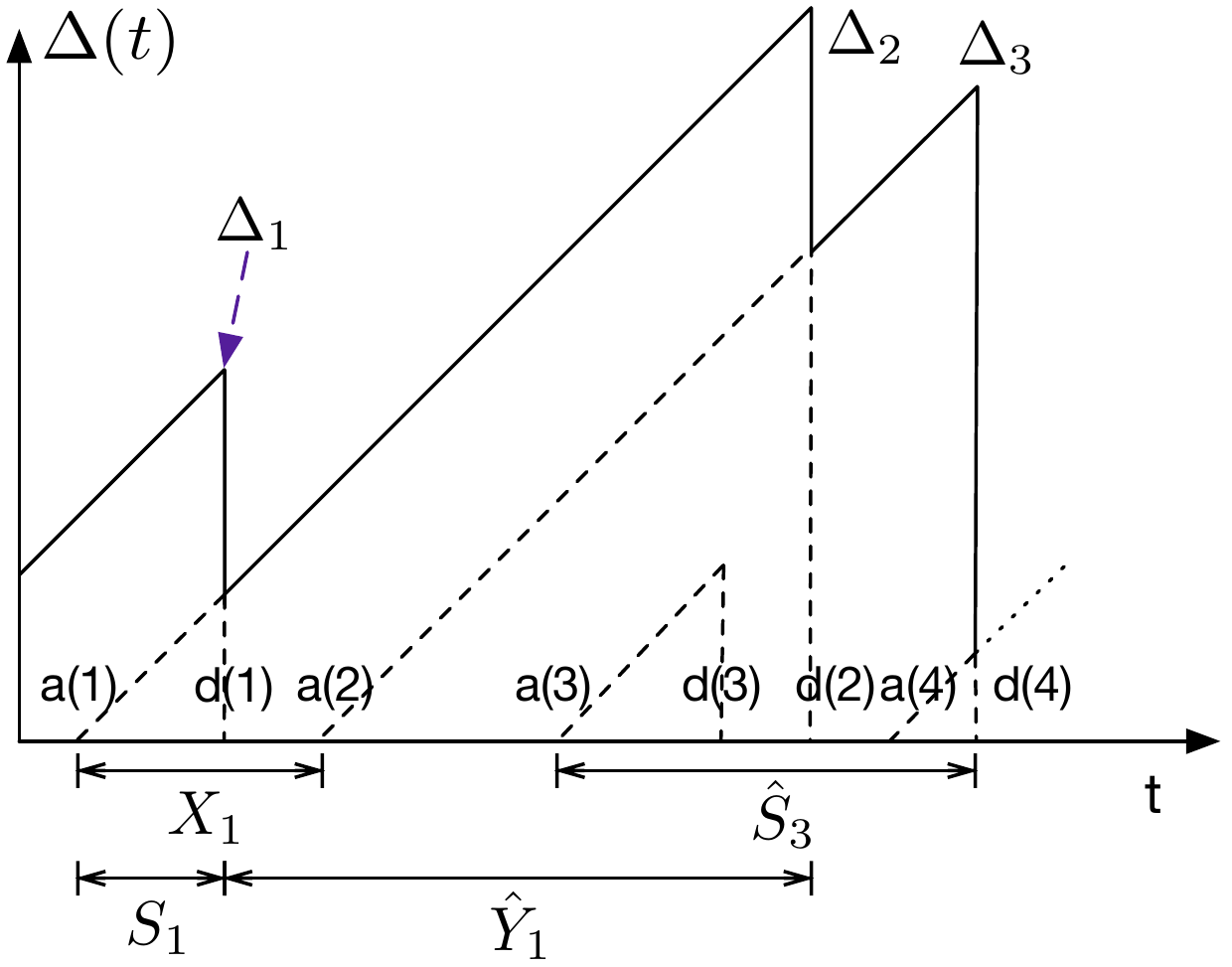}}
  \subfigure[Non-preemptive]{
    \label{fig_nonpre} 
    \includegraphics[width=1.65in]{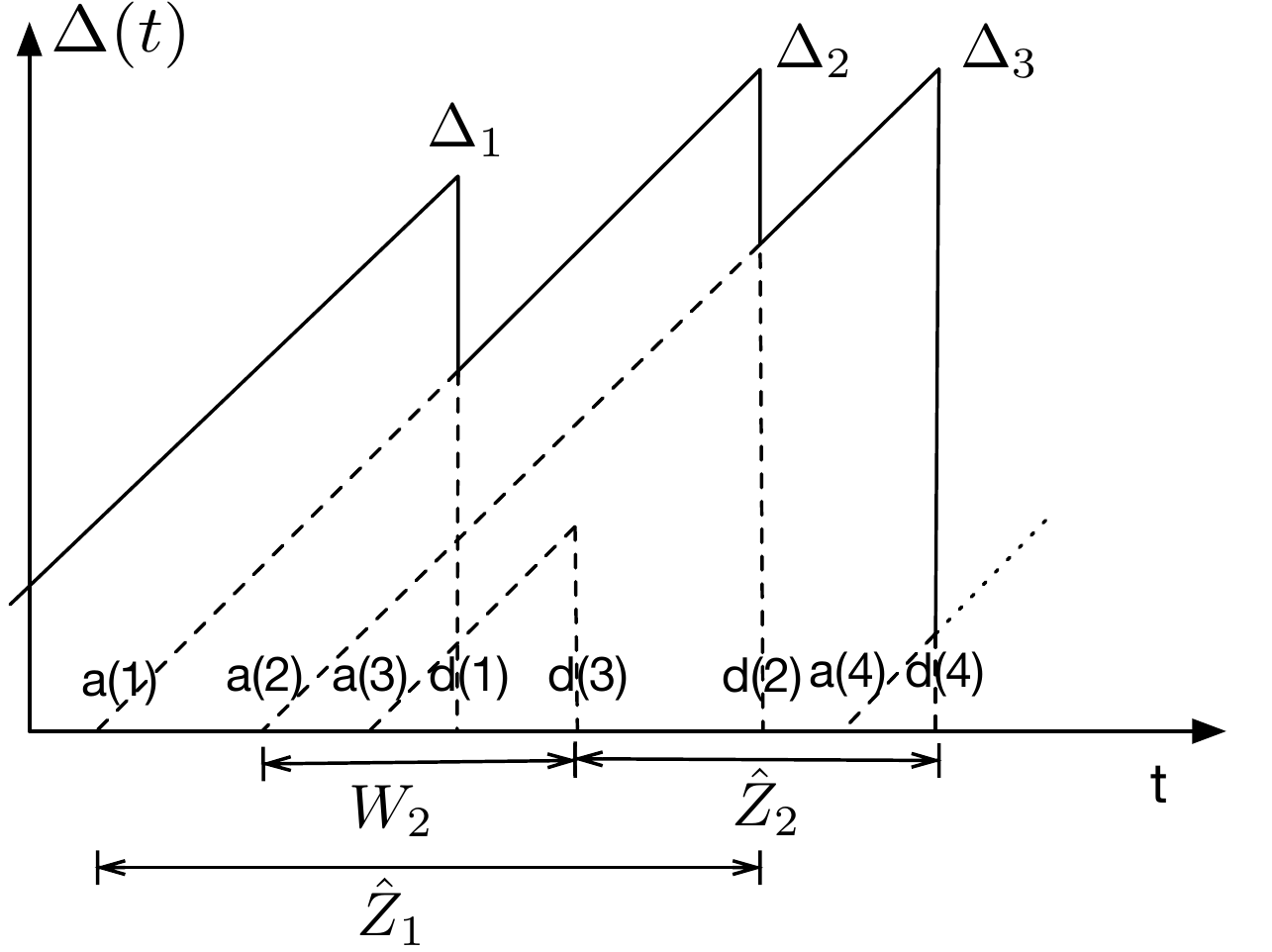}}
  \caption{Evolution of status age in the LCFS $M/M/1$ system. PAoI is divided in different ways under the preemptive and non-preemptive cases.}
  \label{fig:paoi}
\vspace{-.2in}
\end{figure}
We begin with LCFS with preemptive priority. Note that in this case, $u(n)=a(n),\forall n$, i.e., packets get served immediately upon arrival. 
%
Moreover, in this case $\{\hat{S}_n\}_n$ are statistically the same. 
As shown in Fig. \ref{fig_preem}, PAoI is the elapsed time from the moment when a packet $n_i\in \Psi$ arrives,  until the moment when  $n_{i+1}\in \Psi$ departs (recall that $\Psi$ denotes the set of informative packets). Define the first informative packet which arrives after $n$'s departure as 
\begin{eqnarray*}
\beta(n)\triangleq\min\{n_i|n_i\in\Psi,\ a(n_i)> d(n)\},
\end{eqnarray*}
and the inter-departure time between $n$ and $\beta(n)$:
\begin{eqnarray*}
\hat{Y}_n\triangleq d(\beta(n))-d(n).
\end{eqnarray*}
Since the packets arriving after $a(n_i)$ but before $d(n_i)$ preempt $n_i$ and get lost upon departure (because $n_i\in\Psi$), we have (see Fig. \ref{fig_preem}):
\begin{eqnarray}
A_P^{LCFS,pre}=\mathbb{E}\{S_{n_i}+\hat{Y}_{n_i}|n_i\in\Psi\}.
\end{eqnarray}

\subsection{Analyzing a Service Process}
Here we use $S_n$ to also denote the process of serving a packet $n$. For simplicity,  we define the following symbols (notice that in other sections these symbols may have different definitions): 
\begin{eqnarray*} 
\tilde{p}&\triangleq& \mathbb{P}(\hat{S}_n\le S_n), \\
\tilde{t}&\triangleq& \mathbb{E}\{\hat{S}_n|\hat{S}_n\le S_n\}, \\
\tilde{s}&\triangleq& \mathbb{E}\{S_n|\hat{S}_n > S_n\},
\end{eqnarray*}
i.e., $\tilde{p}$ is the probability that there exists a packet that reaches the destination successfully during $S_n$ (including $n$ and the packets arriving after $a(n)$ but before $d(n)$) .

%
We first have the following lemma, based on which we will derive $\tilde{t}$ and $\tilde{s}$.
\begin{lemma}\label{lem}
For a nonnegative random variable $X$, an event $E$ and a sequence of events $E_1,E_2,\dots,E_K$ which satisfies $E_i\cap E_j=\emptyset ,\forall i\ne j$ and $E=\cup_{k=1}^K E_k$, we have
$$\mathbb{P}(E)\mathbb{E}\{X|E\}=\sum_{k=1}^K\mathbb{P}(E_k)\mathbb{E}\{X|E_k\}. $$
\end{lemma}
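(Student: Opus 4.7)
The plan is to recognize that this identity is essentially the law of total expectation applied to a finite partition, rewritten so that both sides are unconditional. The natural route is to convert the conditional expectations to unconditional ones via indicator functions, exploit the disjointness of the $E_k$'s to decompose the indicator of $E$, and finish with linearity of expectation.

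First I would write the left-hand side as $\mathbb{P}(E)\mathbb{E}\{X \mid E\} = \mathbb{E}\{X \mathbf{1}_{E}\}$, which is just the definition of conditional expectation (valid because $X$ is nonnegative, so there is no integrability issue to worry about even if $\mathbb{P}(E) = 0$, in which case both sides are $0$ by convention). Next, since $E = \cup_{k=1}^K E_k$ and the $E_k$ are pairwise disjoint, the indicator satisfies $\mathbf{1}_{E} = \sum_{k=1}^K \mathbf{1}_{E_k}$ pointwise. Substituting, I get
\begin{equation*}
\mathbb{P}(E)\mathbb{E}\{X \mid E\} = \mathbb{E}\Bigl\{ X \sum_{k=1}^K \mathbf{1}_{E_k} \Bigr\} = \sum_{k=1}^K \mathbb{E}\{X \mathbf{1}_{E_k}\},
\end{equation*}
where the swap between expectation and the finite sum is just linearity. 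Finally, applying the same identity $\mathbb{E}\{X \mathbf{1}_{E_k}\} = \mathbb{P}(E_k)\mathbb{E}\{X \mid E_k\}$ in the reverse direction to each term yields the right-hand side.

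There is really no substantial obstacle here; the statement is a routine consequence of the definition of conditional expectation and the partition of $E$. The only minor care needed is the degenerate case where some $\mathbb{P}(E_k) = 0$, which we handle by the usual convention $\mathbb{P}(E_k)\mathbb{E}\{X \mid E_k\} := \mathbb{E}\{X \mathbf{1}_{E_k}\} = 0$ so that the equality still holds termwise. Nonnegativity of $X$ guarantees all conditional expectations are well defined in $[0, \infty]$, so no integrability hypothesis is required.
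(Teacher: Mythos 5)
Your proof is correct, but it takes a different route from the paper's. The paper proves the lemma via the tail-integral (``layer cake'') representation of expectation for nonnegative random variables, writing $\mathbb{E}\{X\mid E\} = \int_0^\infty \mathbb{P}(X>x\mid E)\,dx$, multiplying through by $\mathbb{P}(E)$ to get $\int_0^\infty \mathbb{P}(X>x, E)\,dx$, and then splitting the joint probability over the disjoint $E_k$ under the integral sign. You instead convert each conditional expectation to an indicator-weighted unconditional expectation $\mathbb{E}\{X\mathbf{1}_E\}$, decompose $\mathbf{1}_E = \sum_k \mathbf{1}_{E_k}$ by disjointness, and invoke linearity. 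Both arguments are elementary and short. The paper's approach makes essential use of the nonnegativity hypothesis (the tail formula is specific to nonnegative variables), whereas your indicator-function approach only uses nonnegativity to guarantee well-definedness in $[0,\infty]$ and would extend verbatim to any integrable $X$; you also handle the $\mathbb{P}(E_k)=0$ degeneracies more explicitly than the paper does. In short, yours is slightly more general and arguably cleaner, while the paper's is a straightforward application of the tail formula.
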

\begin{proof}
\begin{eqnarray*}
\mathbb{P}(E)\mathbb{E}\{X|E\}&=&\mathbb{P}(E)\int_0^\infty\mathbb{P}(X>x|E)dx \\
&=&\mathbb{P}(E)\int_0^\infty\frac{\mathbb{P}(X>x,E)}{\mathbb{P}(E)}dx \\
&=&\int_0^\infty\sum_{k=1}^K\mathbb{P}(X>x,E_k)dx \\
&=&\sum_{k=1}^K\mathbb{P}(E_k)\int_0^\infty\frac{\mathbb{P}(X>x,E_k)}{\mathbb{P}(E_k)}dx \\
&=&\sum_{k=1}^K\mathbb{P}(E_k)\mathbb{E}\{X|E_k\}
\end{eqnarray*}
\vspace{-.1in}
\end{proof}
The probability that $X_n\le S_n$ is $\frac{\lambda}{\lambda+\mu}$. If that happens, the system will first serve packet $n+1$ (during which other new packets may come and complete service before $n+1$), then continue the service of $n$. 
Based on this observation, we have: 
\begin{align}
\tilde{p}=&\frac{\mu}{\lambda+\mu}p+\frac{\lambda}{\lambda+\mu}[\mathbb{P}(\hat{S}_{n+1}\le S_{n+1}) +\mathbb{P}(\hat{S}_{n+1}> S_{n+1})\nonumber\\
&\qquad\times\mathbb{P}(\hat{S}_n\le S_n|X_n\le S_n,\hat{S}_{n+1}> S_{n+1})]\nonumber\\
=&\frac{\mu}{\lambda+\mu}p+\frac{\lambda}{\lambda+\mu}[\tilde{p}+(1-\tilde{p})\tilde{p}].  \label{preem_pt_eq1}
\end{align}
Furthermore, we have: 
\begin{align}
\tilde{p}\tilde{t}=&\frac{\mu}{\lambda+\mu}p\mathbb{E}\{S_n|X_n>S_n,n\in\Phi\}+ \frac{\lambda}{\lambda+\mu}\big[\tilde{p}\mathbb{E}\{X_n+\nonumber\\
&\qquad\hat{S}_{n+1}|X_n\le S_n,\hat{S}_{n+1}\le S_{n+1}\}+(1-\tilde{p})\tilde{p} \nonumber \\
&\qquad\times\mathbb{E}\{\hat{S}_n|X_n\le S_n,\hat{S}_{n+1}> S_{n+1},\hat{S}_{n}\le S_{n}\}\big] \nonumber \\
=&\frac{\mu}{\lambda+\mu}p\frac{1}{\lambda+\mu}+ \frac{\lambda}{\lambda+\mu}\big[\tilde{p}(\frac{1}{\lambda+\mu}+\tilde{t})\nonumber \\
&\qquad+(1-\tilde{p})\tilde{p}(\frac{1}{\lambda+\mu}+\tilde{s}+\tilde{t})\big], \label{preem_tt_eq}
\end{align}
\begin{align}
\hspace{-.05in}(1-\tilde{p})\tilde{s}=&
\frac{\mu(1-p)}{\lambda+\mu}\mathbb{E}\{S_n|X_n>S_n,n\notin\Phi\}+\frac{\lambda}{\lambda+\mu}(1-\tilde{p})^2 \nonumber\\
&\qquad\times\mathbb{E}\{S_n|X_n\le S_n,\hat{S}_{n+1}> S_{n+1},\hat{S}_{n}> S_{n}\} \nonumber\\
=&\frac{\mu(1-p)}{\lambda+\mu}\frac{1}{\lambda+\mu}+\frac{\lambda}{\lambda+\mu}(1-\tilde{p})^2(\frac{1}{\lambda+\mu}+2\tilde{s}).
\label{preem_st_eq}
\end{align}
In the above, we have used
\begin{eqnarray*}
\hspace{-.4in}&&\mathbb{P}(\hat{S}_n\le S_n|X_n\le S_n,\hat{S}_{n+1}> S_{n+1}) \\
\hspace{-.4in}&&=\mathbb{P}(\hat{S}_n-X_{n}-S_{n+1}\le S_n-X_{n}-S_{n+1}|\\
\hspace{-.4in}&&\qquad\quad \hat{S}_{n}> X_{n}+S_{n+1}) \\
\hspace{-.4in}&&=\tilde{p},
\end{eqnarray*}
 and that 
\begin{eqnarray*}
\hspace{-.3in}&&\mathbb{E}\{\hat{S}_n|X_n\le S_n,\hat{S}_{n+1}> S_{n+1},\hat{S}_{n}\le S_{n}\}\\
\hspace{-.3in}&&=\mathbb{E}\{X_n+S_{n+1}+\hat{S}_n-X_{n}-S_{n+1}|\\
\hspace{-.3in}&&\qquad\quad X_n\le S_n,\hat{S}_{n+1}> S_{n+1},\hat{S}_{n}\le S_{n}\} \\
\hspace{-.3in}&&=\mathbb{E}\{X_n|X_n\le S_n\}+\mathbb{E}\{S_{n+1}|\hat{S}_{n+1}> S_{n+1}\}\\
\hspace{-.3in}&&\qquad\quad +\mathbb{E}\{\hat{S}_n-X_{n}-S_{n+1}|\hat{S}_{n}> X_{n}+S_{n+1},\hat{S}_{n}\le S_{n}\} \\
\hspace{-.3in}&&=\frac{1}{\lambda+\mu}+\tilde{s}+\tilde{t},
\end{eqnarray*}
since both the services and arrivals are memoryless. We  get from \eqref{preem_pt_eq1} that: 
\begin{eqnarray}
\lambda\tilde{p}^2+(\mu-\lambda)\tilde{p}-\mu p=0,
\label{preem_pt_eq2}
\end{eqnarray}
which leads to: 
\begin{eqnarray}
\tilde{p}=\frac{-(\mu-\lambda)+\sqrt{(\mu-\lambda)^2+4\lambda\mu p}}{2\lambda}. 
\label{preem_pt_res}
\end{eqnarray}
Solving \eqref{preem_tt_eq} and \eqref{preem_st_eq}, and using  \eqref{preem_pt_eq2} give us: 
\begin{eqnarray}
(1-\tilde{p})\tilde{s}&=&\frac{1-\tilde{p}}{\mu-\lambda+2\lambda\tilde{p}},
\label{preem_st_res}\\
\tilde{p}\tilde{t}&=&\frac{\tilde{p}+\frac{\mu}{\mu-\lambda+2\lambda\tilde{p}}(\tilde{p}-p)}{\mu-\lambda+\lambda\tilde{p}}. 
\label{preem_tt_res}
\end{eqnarray}

On the other hand, $n\in\Psi$ means that only $n$ reaches the destination successfully during $S_n$, which is equivalent to $\hat{S}_n=S_n$. Therefore, we get: 
\begin{eqnarray*}
\mathbb{P}(n\in\Psi)&=&\frac{\mu}{\lambda+\mu}p+\frac{\lambda}{\lambda+\mu}(1-\tilde{p})\\
&&\quad\times\mathbb{P}(\hat{S}_n-X_n-S_{n+1}= S_n-X_n-S_{n+1}),
\end{eqnarray*}
\begin{eqnarray*}
\mathbb{P}(n\in\Psi)\mathbb{E}\{S_n|n\in\Psi\}&=&\frac{\mu}{\lambda+\mu}p\frac{1}{\lambda+\mu}+\frac{\lambda}{\lambda+\mu}(1-\tilde{p}) \nonumber\\
&&\quad\times\mathbb{P}(\hat{S}_n= S_n)(\frac{1}{\lambda+\mu}+\tilde{s} \nonumber\\
&&\quad+\mathbb{E}\{S_n|n\in\Psi\}),
\end{eqnarray*}
where we have used $\mathbb{E}\{S_n-X_n-S_{n+1}|S_n>X_n+S_{n+1}, n\in\Psi\}=\mathbb{E}\{S_n|n\in\Psi\}$. As a result, 
\begin{eqnarray*}
\mathbb{P}(n\in\Psi)&=&\frac{\mu p}{\mu+\lambda\tilde{p}}\\
\mathbb{E}\{S_n|n\in\Psi\}&=&\frac{1}{\mu-\lambda+2\lambda\tilde{p}}.
\end{eqnarray*}

\vspace{-.1in}
\subsection{Computing PAoI}
\vspace{-.05in}


Now consider $\mathbb{E}\{\hat{Y}_{n_i}|n_i\in\Psi\}=\mathbb{E}\{\hat{Y}_{n_i}\}$. Suppose the first packet which arrives after $d(n_i)$ is $\tilde{n}_i$. Since the exponential distribution is memoryless, the expected time from $d(n_i)$ to $a(\tilde{n}_i)$ is $\frac{1}{\lambda}$. If $\hat{S}_{\tilde{n}_i}\le S_{\tilde{n}_i}$, the (expected) remaining  time of $\hat{Y}_{n_i}$ from $a(\tilde{n}_i)$ is $\tilde{t}$. Otherwise the remaining time is $\tilde{s}+\mathbb{E}\{\hat{Y}_{\tilde{n}_i}\}$. Based on the above analysis, 
\begin{eqnarray*}
\mathbb{E}\{\hat{Y}_{n_i}\}=\frac{1}{\lambda}+\tilde{p}\tilde{t}+(1-\tilde{p})(\tilde{s}+\mathbb{E}\{\hat{Y}_{\tilde{n}_i}\}),
\end{eqnarray*}
from which we obtain: 
\begin{eqnarray*}
\mathbb{E}\{\hat{Y}_{n_i}\}=\frac{1}{\tilde{p}}\big[\frac{1}{\lambda}+\tilde{p}\tilde{t}+(1-\tilde{p})\tilde{s}\big]. 
\end{eqnarray*}
Substituting \eqref{preem_st_res} and \eqref{preem_tt_res} into the above gives us: 
\begin{eqnarray*}
\mathbb{E}\{\hat{Y}_{n_i}\}=\frac{\mu(\mu-\lambda)+2\lambda\mu p+\lambda(\lambda+\mu)\tilde{p}}{\lambda\mu p(\mu-\lambda+2\lambda\tilde{p})}.
\end{eqnarray*}
As a result, 
\begin{align}
A_P^{LCFS,pre}=&\mathbb{E}\{S_{n_i}|n_i\in\Psi\}+\mathbb{E}\{\hat{Y}_{n_i}\} \nonumber\\
=&\frac{1}{\mu-\lambda+2\lambda\tilde{p}}+\frac{\mu(\mu-\lambda)+2\lambda\mu  p+\lambda(\lambda+\mu)\tilde{p}}{\lambda\mu p(\mu-\lambda+2\lambda\tilde{p})} \nonumber\\
=&\frac{\mu(\mu-\lambda)+3\lambda\mu p+\lambda(\lambda+\mu)\tilde{p}}{\lambda\mu p(\mu-\lambda+2\lambda\tilde{p})}, \label{paoi_preem}
\end{align}
where $\tilde{p}$ is given in \eqref{preem_pt_res}.
In the case when $p=1$, the above result becomes $PAoI=\frac{1}{\lambda+\mu} + \frac{1}{\lambda}+\frac{1}{\mu}$.

\section{PAoI  under LCFS with Non-Preemptive Priority}\label{section:nonpreemptive}
In this case, if a new packet arrives while the server is busy, it cannot interrupt the current service. 
%
From Fig. \ref{fig_nonpre}, we see that  PAoI is similarly the elapsed time from the moment when a packet $n_i\in \Psi$ arrives, to the moment when $n_{i+1}$ departs.  
%
Define the first informative packet which arrives after $n$ starts to receive service as 
\begin{eqnarray*}
\gamma(n)\triangleq\min\{n_i|n_i\in\Psi,\ a(n_i)> u(n)\}.
\end{eqnarray*}
and the time duration from the moment $n$ starts to receive service to the moment $\gamma(n)$ departs as
\begin{eqnarray*}
\hat{Z}_n\triangleq d(\gamma(n))-u(n).
\end{eqnarray*}
Since the packets arriving after $a(n_i)$ but before $u(n_i)$ are served before $n_i$ and get lost upon departure (because $n_i\in\Psi$), we have (see Fig. \ref{fig_nonpre}):
\begin{eqnarray}
A_P^{LCFS,non}=\mathbb{E}\{W_{n_i}+\hat{Z}_{n_i}|n_i\in\Psi\}.
\end{eqnarray}

\vspace{-.05in}
\subsection{Analyzing a Service Process}
We first define $\bar{S}_n$ as the process since $u(n)$ till the first time the server becomes free or starts to serve a packet that arrives no later than $u(n)$ (excluding $n$). Since $\bar{S}_n$ is determined by the services and arrivals after $u(n)$ and independent of the system state at $u(n)$ and the history before $u(n)$, the $\bar{S}_n$ processes  induced by different packets $n$ are identically distributed. 
We re-define the following symbols: 
\begin{eqnarray*}
\tilde{p}&\triangleq& \mathbb{P}(\hat{Z}_n\le \bar{S}_n), \\
\tilde{t}&\triangleq& \mathbb{E}\{\hat{Z}_n|\hat{Z}_n\le \bar{S}_n\}, \\
\tilde{s}&\triangleq& \mathbb{E}\{\bar{S}_n|\hat{Z}_n > \bar{S}_n\},
\end{eqnarray*}
i.e., $\tilde{p}$ is the probability that there exists a packet which arrives after $u(n)$ and reaches the destination successfully during $\bar{S}_n$.

Consider $\bar{S}_n$. 
Suppose the number of packets arriving during $S_n$ is $\sigma(S_n)$. We have $\forall k\ge 0$,
\begin{eqnarray*}
p(\sigma(S_n)=k)&=&(\frac{\lambda}{\lambda+\mu})^k\frac{\mu}{\lambda+\mu}, \\
\mathbb{E}\{S_n|\sigma(S_n)=k\}&=&\frac{k+1}{\lambda+\mu}.
\end{eqnarray*}
If $\sigma(S_n)=k>0$ (which is needed for $\hat{Z}_n\le \bar{S}_n$), when $n$ completes service, the system will serve the $(n+k)$-th packet and enter $\bar{S}_{n+k}$. 
If $n+k\in\Phi$,  then $\hat{Z}_n\le \bar{S}_n$ and the remaining time of $\hat{Z}_n$ from $d(n)$ is $\mathbb{E}\{S_{n+k}|\sigma(S_n)=k,n+k\in\Phi\}=\frac{1}{\mu}$. 
If $n+k\notin\Phi$, then if $\hat{Z}_{n+k}\le \bar{S}_{n+k}$,  $\hat{Z}_n\le \bar{S}_n$ and the remaining time of $\hat{Z}_n$ from $d(n)$ is $\mathbb{E}\{\hat{Z}_{n+k}|\sigma(S_n)=k,n+k\notin\Phi, \hat{Z}_{n+k}\le \bar{S}_{n+k}\}=\mathbb{E}\{\hat{Z}_{n+k}|\hat{Z}_{n+k}\le \bar{S}_{n+k}\}=\tilde{t}$. Similar analysis applies to the $(n+k-1)$-th, the $(n+k-2)$-th, $\cdots$, and the $(n+1)$-th packet. Thus, using Lemma \ref{lem},
\begin{eqnarray}
\hspace{-.3in}\tilde{p}&=&\sum_{k=1}^\infty\frac{\mu}{\lambda+\mu}(\frac{\lambda}{\lambda+\mu})^k\big\{p+(1-p)\tilde{p} \nonumber\\
\hspace{-.3in}&&\qquad +(1-p)(1-\tilde{p})[p+(1-p)\tilde{p}]+\cdots \nonumber\\
\hspace{-.3in}&&\qquad+(1-p)^{k-1}(1-\tilde{p})^{k-1}[p+(1-p)\tilde{p}]\big\}\nonumber\\
\hspace{-.3in}&=&\frac{\lambda}{\lambda+\mu}-\frac{\mu}{\lambda+\mu}\frac{\frac{\lambda}{\lambda+\mu}(1-p)(1-\tilde{p})}{1-\frac{\lambda}{\lambda+\mu}(1-p)(1-\tilde{p})},  \label{nonpre_pt_eq1} 
\end{eqnarray}
\begin{eqnarray}
\hspace{-.3in}\tilde{p}\tilde{t}&=&\sum_{k=1}^\infty\frac{\mu}{\lambda+\mu}(\frac{\lambda}{\lambda+\mu})^k\bigg\{p(\frac{k+1}{\lambda+\mu}+\frac{1}{\mu}) \nonumber\\
\hspace{-.3in}&&+(1-p)\tilde{p}(\frac{k+1}{\lambda+\mu}+\tilde{t})+(1-p)(1-\tilde{p}) \nonumber\\
\hspace{-.3in}&& \qquad\times\big[p(\frac{k+1}{\lambda+\mu}+\tilde{s}+\frac{1}{\mu})+(1-p)\tilde{p}(\frac{k+1}{\lambda+\mu} \nonumber\\
\hspace{-.3in}&& +\tilde{s}+\tilde{t})\big]+\cdots+(1-p)^{k-1}(1-\tilde{p})^{k-1}\nonumber\\
\hspace{-.3in}&&\qquad\times\big[p(\frac{k+1}{\lambda+\mu}+k\tilde{s}-\tilde{s}+\frac{1}{\mu}) \nonumber\\
\hspace{-.3in}&& +(1-p)\tilde{p}(\frac{k+1}{\lambda+\mu}+k\tilde{s}-\tilde{s}+\tilde{t})\big]\bigg\} \nonumber \\
\hspace{-.3in}&=&\sum_{k=1}^\infty\frac{\mu}{\lambda+\mu}(\frac{\lambda}{\lambda+\mu})^k\bigg\{\sum_{j=0}^{k-1}(1-p)^{j}(1-\tilde{p})^{j} \nonumber\\
\hspace{-.3in}&&\qquad\times\big[p(\frac{k+1}{\lambda+\mu}+j\tilde{s}+\frac{1}{\mu})\nonumber\\
\hspace{-.3in}&&+(1-p)\tilde{p}(\frac{k+1}{\lambda+\mu}+j\tilde{s}+\tilde{t})\big]\bigg\},  \label{nonpre_tt_eq}
\end{eqnarray}
\vspace{-.05in}
and 
\begin{eqnarray}
\hspace{-.3in} (1-\tilde{p})\tilde{s}&=&\frac{\mu}{\lambda+\mu}\frac{1}{\lambda+\mu}+\sum_{k=1}^\infty\frac{\mu}{\lambda+\mu}(\frac{\lambda}{\lambda+\mu})^k \nonumber\\
\hspace{-.3in}&&\qquad\times(1-p)^{k}(1-\tilde{p})^{k}(\frac{k+1}{\lambda+\mu}+k\tilde{s}). \label{nonpre_st_eq}
\end{eqnarray}
%
From \eqref{nonpre_pt_eq1}, we get: 
\begin{eqnarray}
\lambda(1-p)\tilde{p}^2+(\mu-\lambda+2\lambda p)\tilde{p}-\lambda p=0,
\label{nonpre_pt_eq2}
\end{eqnarray}
which leads to
\begin{eqnarray}
\tilde{p}=
\begin{cases}
\frac{-(\mu-\lambda+2\lambda p)+\sqrt{(\lambda+\mu)^2-4\lambda\mu(1-p)}}{2\lambda(1-p)},\  &0<p<1 \\
\frac{\lambda}{\lambda+\mu},\ &p=1
\end{cases}.
\label{nonpre_pt_res}
\end{eqnarray}
Solving \eqref{nonpre_tt_eq} and \eqref{nonpre_st_eq}, and using  \eqref{nonpre_pt_eq2}  give us
\begin{eqnarray}
\hspace{-.3in}(1-\tilde{p})\tilde{s}&=&\frac{1-\tilde{p}}{\lambda+\mu-2\lambda(1-p)(1-\tilde{p})},\label{st2}\\
\hspace{-.3in}\tilde{p}\tilde{t}&=&\frac{\lambda p+2\lambda p^2+(\lambda-2\lambda p^2-\mu+\mu p)\tilde{p}}{\mu p[\lambda+\mu-2\lambda(1-p)(1-\tilde{p})]}. \label{tt2}
\end{eqnarray}
\vspace{-.15in}
\subsection{Computing PAoI}
\vspace{-.05in}
Now we  compute PAoI shown in Fig. \ref{fig_nonpre}. Define $\pi(t)$ as the number of packets in the system (including the packet being served) at time $t$. So $\pi(t)=0$ means the system is free at time $t$. Different from the preemptive case, here $\pi(a(n_i))$ and $\pi(u(n_i))$ will respectively affect $W_{n_i}$ and $\hat{Z}_{n_i}$, in that they affect the degree to which new packets need to wait for service completion.  

We first compute the number of packets an arrival in $\Psi$ sees when it arrives. Since $\Psi$ is a special set of packets, they do not see exactly as what an ordinary packet will see. 
To this end, 
we define for each $k$
\begin{eqnarray}
p_k&\triangleq&\mathbb{P}[\pi(a(n))=k|n\in\Psi] \nonumber\\
&=&\frac{\mathbb{P}[\pi(a(n))=k,n\in\Psi]}{\mathbb{P}(n\in\Psi)}.
\end{eqnarray}
Consider the waiting time $W_n$ of packet $n$. If $\pi(a(n))=0$ then $W_n=0$. Otherwise $n$ needs to wait for the completion of the current service and the services of packets which arrive during the current service, till the server starts to serve a packet arriving no later that $a(n)$. Since the exponential distribution is memoryless, for $\pi(a(n))>0$, $W_n$ is the same as the process $\bar{S}_{\bar{n}}$ of a virtual packet $\bar{n}$ with $u(\bar{n})=a(n)$, and $n\in\Psi$ is equivalent to $(\hat{Z}_{\bar{n}}> \bar{S}_{\bar{n}})\cap (n\in \Phi)$. 
For a steady-state $M/M/1$ queue, we know that $\mathbb{P}[\pi(t)=k]=(1-\frac{\lambda}{\mu})(\frac{\lambda}{\mu})^k$. Thus, 
\begin{eqnarray*}
p_0&=&\frac{(1-\frac{\lambda}{\mu})p}{\mathbb{P}[n\in\Psi]}, \\
p_k&=&\frac{(1-\frac{\lambda}{\mu})(\frac{\lambda}{\mu})^k(1-\tilde{p})p}{\mathbb{P}[n\in\Psi]}, k\geq1. 
\end{eqnarray*}
Moreover, $\sum_{k=0}^\infty p_k$=1. Therefore, 
\begin{eqnarray*}
p_0&=&\frac{\mu-\lambda}{\mu-\lambda\tilde{p}}, \\
p_k&=&\frac{\mu-\lambda}{\mu-\lambda\tilde{p}}(1-\tilde{p})(\frac{\lambda}{\mu})^k, k\geq1.
\end{eqnarray*}
Hence, the waiting time can be computed as: 
\begin{eqnarray*}
\mathbb{E}\{W_{n_i}|n_i\in\Psi\}&=& \sum_{k=0}^\infty p_k\mathbb{E}\{W_{n_i}|\pi(a(n_i))=k,n_i\in\Psi\} \\
&=&p_0\cdot0+(1-p_0)\tilde{s} \\
&=&\frac{\lambda(1-\tilde{p})}{(\mu-\lambda\tilde{p})[\lambda+\mu-2\lambda(1-p)(1-\tilde{p})]}.
\vspace{-.1in}
\end{eqnarray*}
For $\mathbb{E}\{\hat{Z}_{n_i}|n_i\in\Psi\}$, define: 
\begin{eqnarray*}
& z_k\triangleq \mathbb{E}\{\hat{Z}_{n}|\pi(u(n))=k, n\in\Psi\}= \mathbb{E}\{\hat{Z}_{n}|\pi(u(n))=k\}.
\end{eqnarray*}
For $\mathbb{E}\{\hat{Z}_{n}|\pi(u(n))=k\}$, if a packet $n_j$ arrives during $S(n)$ (with probability $\frac{\lambda}{\lambda+\mu}$), it will wait $W_{n_j}=\bar{S}_{\bar{n}_j}$ before being served, with $\bar{n}_j$ a virtual packet defined as before. Since  $\pi(u(n_j))=k$, if $\hat{Z}_{\bar{n}_j}> \bar{S}_{\bar{n}_j}$ and $n_j\notin \Phi$, the (expected) remaining time of $\hat{Z}_{n}$ from $u(n_j)$ is still $z_k$. Otherwise no packets arrives during $S(n)$, giving us $\pi(d(n))=k-1$. Based on the above analysis, we get:
\begin{eqnarray}
\hspace{-.2in}&&z_1=\frac{\mu}{\lambda+\mu}\big[\frac{1}{\lambda+\mu}+\frac{1}{\lambda}+p\frac{1}{\mu}+(1-p)z_1\big] \label{T1}\\
\hspace{-.2in}&&\qquad\qquad +\frac{\lambda}{\lambda+\mu}\big[\frac{1}{\lambda+\mu}+\tilde{p}\tilde{t}+(1-\tilde{p})p(\tilde{s}+\frac{1}{\mu}) \nonumber \\
\hspace{-.2in}&&\qquad\qquad\qquad\qquad\qquad\qquad+(1-\tilde{p})(1-p)(\tilde{s}+z_1)\big],  \nonumber 
\end{eqnarray}
and that for general $k$, 
\begin{eqnarray}
\hspace{-.2in}&&z_k=\frac{\mu}{\lambda+\mu}(\frac{1}{\lambda+\mu}+z_{k-1}) \label{Tk} \\
\hspace{-.2in}&&\qquad\qquad+\frac{\lambda}{\lambda+\mu}\big[\frac{1}{\lambda+\mu}+\tilde{p}\tilde{t}+(1-\tilde{p})p(\tilde{s}+\frac{1}{\mu}) \nonumber\\
\hspace{-.2in}&&\qquad\qquad\qquad\qquad\qquad\qquad+(1-\tilde{p})(1-p)(\tilde{s}+z_k)\big]. \nonumber
\end{eqnarray}
 Solving \eqref{T1} gives us
\begin{eqnarray}
z_1=\frac{\mu+\lambda+\lambda p+\lambda^2\tau}{\lambda[\lambda+\mu p-\lambda(1-p)(1-\tilde{p})]}, \label{T1_res}
\end{eqnarray}
where
\begin{eqnarray}
\tau=\frac{(\lambda+\mu)p+(\lambda+\mu)p^2+[\lambda+(\mu-\lambda)p^2-\mu]\tilde{p}}{\mu p[\lambda+\mu-2\lambda(1-p)(1-\tilde{p})]}.  \label{alpha}
\end{eqnarray}
From \eqref{Tk}, we can get
\begin{eqnarray*}
&&z_k-\frac{(1-\tilde{p})(1+\lambda\tau)}{\mu\tilde{p}}\\
&&\qquad\qquad\qquad =(1-\tilde{p})[z_{k-1}-\frac{(1-\tilde{p})(1+\lambda\tau)}{\mu\tilde{p}}].
\end{eqnarray*}
The evolution of the LCFS queueing  system shows that if a packet $n$ sees no more than one packet when it arrives, then there will be only one packet in the system (packet $n$ itself) when it starts to receive service. Thus, 
\begin{align}
\mathbb{P}[\pi(u(n))=1|n\in\Psi] &=p_0+p_1, \\
\mathbb{P}[\pi(u(n))=k|n\in\Psi] &=p_k,k\geq2.
\end{align}
Hence, 
\begin{eqnarray*}
\hspace{-.3in}&&\mathbb{E}\{\hat{Z}_{n_i}|n_i\in\Psi\} \\
\hspace{-.3in}&&= \sum_{k=0}^\infty \mathbb{P}[\pi(u(n_i))=k|n_i\in\Psi]\mathbb{E}\{\hat{Z}_{n_i}|\pi(u(n_i))=k,n_i\in\Psi\}\\
\hspace{-.3in}&&=(p_0+p_1)z_1+\sum_{k=2}^\infty p_kz_k \\
\hspace{-.3in}&& = p_0z_1+\sum_{k=1}^\infty p_kz_k \\
\hspace{-.3in}&&=p_0z_1+\sum_{k=1}^\infty p_k[z_k-\frac{(1-\tilde{p})(1+\lambda\tau)}{\mu\tilde{p}}] \\
\hspace{-.3in}&&\qquad\qquad+\sum_{k=1}^\infty p_k\frac{(1-\tilde{p})(1+\lambda\tau)}{\mu\tilde{p}} \\
\hspace{-.3in}&& = p_0z_1+\sum_{k=1}^\infty p_k(1-\tilde{p})^{k-1}[z_1-\frac{(1-\tilde{p})(1+\lambda\tau)}{\mu\tilde{p}}] \\
\hspace{-.3in}&&\qquad\qquad+\sum_{k=1}^\infty p_k\frac{(1-\tilde{p})(1+\lambda\tau)}{\mu\tilde{p}} \\
\hspace{-.3in}&&=\frac{\mu(\mu-\lambda)(\mu+\lambda+\lambda p+\lambda^2\tau)}{\lambda(\mu-\lambda\tilde{p})[\mu-\lambda(1-\tilde{p})][\lambda+\mu p-\lambda(1-p)(1-\tilde{p})]} \\
\hspace{-.3in}&&\qquad\qquad\qquad+\frac{\lambda^2(1-\tilde{p})^2(1+\lambda\tau)}{\mu(\mu-\lambda\tilde{p})[\mu-\lambda(1-\tilde{p})]}. 
\end{eqnarray*}
Therefore, PAoI can be computed as: 
\begin{eqnarray}
\hspace{-.4in}&&A_P^{LCFS,non}=  \frac{\lambda(1-\tilde{p})}{(\mu-\lambda\tilde{p})[\lambda+\mu-2\lambda(1-p)(1-\tilde{p})]} \nonumber\\
\hspace{-.4in}&&+\frac{\mu(\mu-\lambda)(\mu+\lambda+\lambda p+\lambda^2\tau)}{\lambda(\mu-\lambda\tilde{p})[\mu-\lambda(1-\tilde{p})][\lambda+\mu p-\lambda(1-p)(1-\tilde{p})]} \nonumber\\
\hspace{-.4in}&&+\frac{\lambda^2(1-\tilde{p})^2(1+\lambda\tau)}{\mu(\mu-\lambda\tilde{p})[\mu-\lambda(1-\tilde{p})]}, \label{paoi_nonpre}
\end{eqnarray}
where $\tau$ is given in \eqref{alpha} and $\tilde{p}$ is given in \eqref{nonpre_pt_res}.
From this result, we can see that even in the case of $p=1$, the solution is non-trivial. 


\vspace{-.05in}
\section{PAoI under Retransmission with Preemptive Priority}\label{section:retran-preemptive}
\vspace{-.05in}

In this case, we consider the case when a packet is transmitted repeatedly until it reaches the destination successfully or it is preempted. Thus, $\Phi=\Psi$. Here the ``departure" of a packet means the moment it is transmitted successfully or preempted. Actually, since we do not assume feedback, a packet will still be served before the arrival of the next packet even if it has been transmitted successfully, but that has no influence to the system due to the preemptive priority. We can regard this policy as only storing the latest packet and replace it with a new one as soon as a new packet arrives.
\vspace{-.1in} 
\subsection{Analyzing a Service Process}

We divide PAoI in the same way as in Section \ref{section:preemptive} and again re-define the following symbols: 
\begin{eqnarray*}
\tilde{p}&\triangleq& \mathbb{P}(n\in \Psi), \\
\tilde{t}&\triangleq& \mathbb{E}\{\hat{S}_n|n\in \Psi\}, \\
\tilde{s}&\triangleq& \mathbb{E}\{\hat{S}_n\},
\end{eqnarray*}
i.e., $\tilde{p}$ is the probability that there is no packets arriving during $\hat{S}_n$, or $\hat{S}_n=S_n$. Other than the total service time $S_n$, we further use $S_{n,k}$ to denote the $k$-th service of packet $n$.

For the $\hat{S}_n$ process, in $S_{n,1}$, if $n+1$ arrives (with probability $\frac{\lambda}{\lambda+\mu}$), it will preempt $n$ and the remaining time of $\hat{S}_n$ from $a(n+1)$ is $\hat{S}_{n+1}$. Otherwise if $n$ gets lost upon service completion, the server will start to retransmit $n$ and $\mathbb{E}\{\hat{S}_n-S_{n,1}|X_n>S_{n,1}, \hat{S}_n>S_{n,1}\}=\tilde{s}$, since the first transmission does not influence the following retransmissions.
Based on this observation, we get: 
\begin{eqnarray}
\tilde{s}&=&\frac{\mu}{\lambda+\mu}\big[p\frac{1}{\lambda+\mu}+(1-p)(\frac{1}{\lambda+\mu}+\tilde{s})\big]\nonumber\\
&&\qquad+\frac{\lambda}{\lambda+\mu}(\frac{1}{\lambda+\mu}+\tilde{s}),
\end{eqnarray}
which leads to:
\begin{eqnarray}
\tilde{s}=\frac{1}{p\mu}.
\end{eqnarray}
If $n\in \Psi$, we know $n$ has been transmitted successfully before it is preempted. By considering the number of transmissions it takes to successfully transmit $n$ and using Lemma \ref{lem}, we have:
\begin{eqnarray}
\tilde{p}&=&\sum_{k=0}^\infty(1-p)^k p (\frac{\mu}{\lambda+\mu})^{k+1}, \label{retran_preem_ph}\\
\tilde{p}\tilde{t}&=&\sum_{k=0}^\infty(1-p)^k p (\frac{\mu}{\lambda+\mu})^{k+1}\frac{k+1}{\lambda+\mu}. \label{retran_preem_th}
\end{eqnarray}
By solving \eqref{retran_preem_ph} and \eqref{retran_preem_th}, we can get
\begin{eqnarray}
\tilde{p}&=&\frac{p\mu}{\lambda+p\mu}, \\
\tilde{t}&=&\frac{1}{\lambda+p\mu}.
\end{eqnarray}

\subsection{Computing PAoI}
Now we are going to compute the PAoI. 
%
For $\hat{Y}_{n_i}$, the expected time from $d(n_i)$ to $a(n_{i}+1)$ is $\frac{1}{\lambda}$ and the process from $a(n_{i}+1)$ to $d(n_{i+1})$  is $\hat{S}_{n_i+1}$. Hence
\begin{eqnarray*}
\mathbb{E}\{\hat{Y}_{n_i}|n_i\in \Psi\}=\frac{1}{\lambda}+\tilde{s}.
\end{eqnarray*}
Thus, we can compute PAoI as
\begin{eqnarray}
A_P^{RT,pre}&=& \mathbb{E}\{S_{n_i}+\hat{Y}_{n_i}|n_i\in\Psi\} \nonumber \\
&=& \mathbb{E}\{\hat{S}_{n_i}|n_i\in\Psi\}+\mathbb{E}\{\hat{Y}_{n_i}|n_i\in\Psi\}  \nonumber\\
&=&\tilde{t}+\frac{1}{\lambda}+\tilde{s} \nonumber \\
&=&\frac{1}{\lambda+p\mu}+\frac{1}{\lambda}+\frac{1}{p\mu}. \label{paoi_retran_preem}
\end{eqnarray}

\emph{Remark:} 
It turns out that this result corresponds to the result under the LCFS with preemptive priority policy with a service rate $p\mu$ and a success probability $1$. 
%
This is intuitive since in the LCFS with preemptive priority case with $p=1$, each packet is either transmitted successfully or preempted, while in this case each packet is still either transmitted successfully or preempted, with a mean service time $\frac{1}{p\mu}$.

\section{PAoI under Retransmission with Non-Preemptive Priority}\label{section:retran-nonpreemptive}

Under the Retransmission with non-preemptive priority policy, the server keeps retransmitting the most recent packet no matter it has been successfully transmitted or not. In this case, the most recent packet is kept in the queue and the server is always busy. Since a new arrival can't interrupt the current service, a packet may be replaced by a more recent packet when it is waiting in the queue or upon a service completion. Here the ``departure" of a packet means the moment it is transmitted successfully or replaced. Let $\Omega$ denote the set of packets which are not replaced while waiting in the queue, i.e., the ones that have been served before departure. Note that in this case $\Phi=\Psi$ and $\Phi \subset \Omega$.

\subsection{Analyzing a Service Process}
We divide PAoI in the same way as in Section \ref{section:nonpreemptive}, 
but here $W_n$, $S_n$, $\hat{S}_n$ and $\hat{Z}_n$ are only meaningful for packet $n\in \Omega$, which means that there is no packets arriving during $W_n$.
%
Again we define the following notions for packet $n\in \Omega$:
\begin{eqnarray*}
\tilde{p}&\triangleq& \mathbb{P}(n\in \Psi), \\
\tilde{t}&\triangleq& \mathbb{E}\{\hat{S}_n|n\in \Psi\}, \\
\tilde{s}&\triangleq& \mathbb{E}\{\hat{S}_n\},
\end{eqnarray*}
i.e., $\tilde{p}$ is the probability that $n$ has been transmitted successfully before it is replaced by a more recent packet, or $\hat{S}_n=S_n$.

For the $\hat{S}_n$ process, if n reaches the destination successfully after $S_{n,1}$, then $\hat{S}_n$ ends. Otherwise the server will start to transmit another packet $\hat{n}\in \Omega$ which arrives during $S_{n,1}$, or retransmit $n$, both resulting in the expected remaining time of $\hat{S}_n$ as $\tilde{s}$. 
This observation gives us
\begin{eqnarray}
\tilde{s}=p\frac{1}{\mu}+(1-p)(\frac{1}{\mu}+\tilde{s}),
\end{eqnarray}
which leads to:
\begin{eqnarray}
\tilde{s}=\frac{1}{p\mu}.
\end{eqnarray}
If $n\in \Psi$, we know $n$ has been transmitted successfully before it is replaced. By considering the number of transmissions it takes to successfully transmit $n$ and using Lemma \ref{lem}, we have:
\begin{eqnarray}
\tilde{p}&=&\sum_{k=0}^\infty(1-p)^k(\frac{\mu}{\lambda+\mu})^{k} p, \label{retran_nonpre_ph}\\
\tilde{p}\tilde{t}&=&\sum_{k=0}^\infty(1-p)^k(\frac{\mu}{\lambda+\mu})^{k} p(\frac{k}{\lambda+\mu}+\frac{1}{\mu}). \label{retran_nonpre_th}
\end{eqnarray}
By solving \eqref{retran_nonpre_ph} and \eqref{retran_nonpre_th}, we can get
\begin{eqnarray}
\tilde{p}&=&\frac{p(\lambda+\mu)}{\lambda+p\mu}, \\
\tilde{t}&=&\frac{1}{\mu}+\frac{(1-p)\mu}{(\lambda+\mu)(\lambda+p\mu)}.
\end{eqnarray}

\subsection{Computing PAoI}
Now we are going to compute the PAoI. 
Remember the server is always busy and $n_i\in \Psi$ indicates that there is no arrivals during $n_i$'s waiting time. Thus, 
\begin{eqnarray*}
\mathbb{E}\{W_{n_i}|n_i\in\Psi\}=\frac{1}{\lambda+\mu}. 
\end{eqnarray*}
%

Consider the period $\hat{Z}_{n_i}$. We have $\mathbb{E}\{S_{n_i}|n_i\in\Psi\}=\mathbb{E}\{\hat{S}_{n_i}|n_i\in\Psi\}=\tilde{t}$. If there is a packet $\hat{n_i}$ with $a(\hat{n_i})>a(n_i)$ in the queue at $d(n_i)$ ($\hat{n_i}$ can only arrives during the last service of $n_i$ before $d(n_i)$, and the probability is $\frac{\lambda}{\lambda+\mu}$), the process from $d(n_i)$ to $d(n_{i+1})$ is $\hat{S}_{\hat{n_i}}$. Otherwise,  
the expected time from $d(n_i)$ to $a(n_{i}+1)$ is $\frac{1}{\lambda}$. After that, because the server is always busy, it still needs to complete a service of $n_i$ before it starts to transmit a packet arrives after $n_i$. Based on these observations, we have:
\begin{eqnarray*}
\mathbb{E}\{\hat{Z}_{n_i}|n_i\in\Psi\}=\tilde{t}+\frac{\lambda}{\lambda+\mu}\tilde{s}+\frac{\mu}{\lambda+\mu}(\frac{1}{\lambda}+\frac{1}{\mu}+\tilde{s}). 
\end{eqnarray*}
So PAoI can be computed as:
\begin{eqnarray}
A_P^{RT,non}&=&\frac{1}{\lambda+\mu}+\tilde{t}+\frac{\lambda}{\lambda+\mu}\tilde{s}+\frac{\mu}{\lambda+\mu}(\frac{1}{\lambda}+\frac{1}{\mu}+\tilde{s}) \nonumber \\
&=&\frac{1}{\mu}+\frac{1}{\lambda+p\mu}+\frac{1}{\lambda}+\frac{1}{p\mu}.\label{paoi_retran_nonpre}
\end{eqnarray}

\section{Numerical results}\label{section:numerical}
We present numerical evaluations of PAoI under different scheduling policies, including FCFS, FCFS with packet management (the $M/M/1/2^*$ scheme in \cite{costa}), LCFS with preemptive priority, LCFS with non-preemptive priority, Retransmission with preemptive priority and Retransmission with non-preemptive priority. Note that the $M/M/1/2^*$ scheme in \cite{costa} is equivalent to the LCFS with non-preemptive priority policy that discards all stale packets.
The service rate is set to $\mu=1$ while the arrival rate is varied to show performances under different channel utilizations $\rho=\frac{\lambda}{\mu}$. The cases $p=0.1$, $p=0.5$ and $p=1$ are selected to represent different delivery error regimes. We present not only the results computed from our formulas \eqref{paoi_fcfs}, \eqref{paoi_preem}, \eqref{paoi_nonpre}, \eqref{paoi_retran_preem} and \eqref{paoi_retran_nonpre}, but also those obtained by simulating real queueing systems with the corresponding settings.
\begin{figure}[!t]
\centering
\includegraphics[width=3.6in]{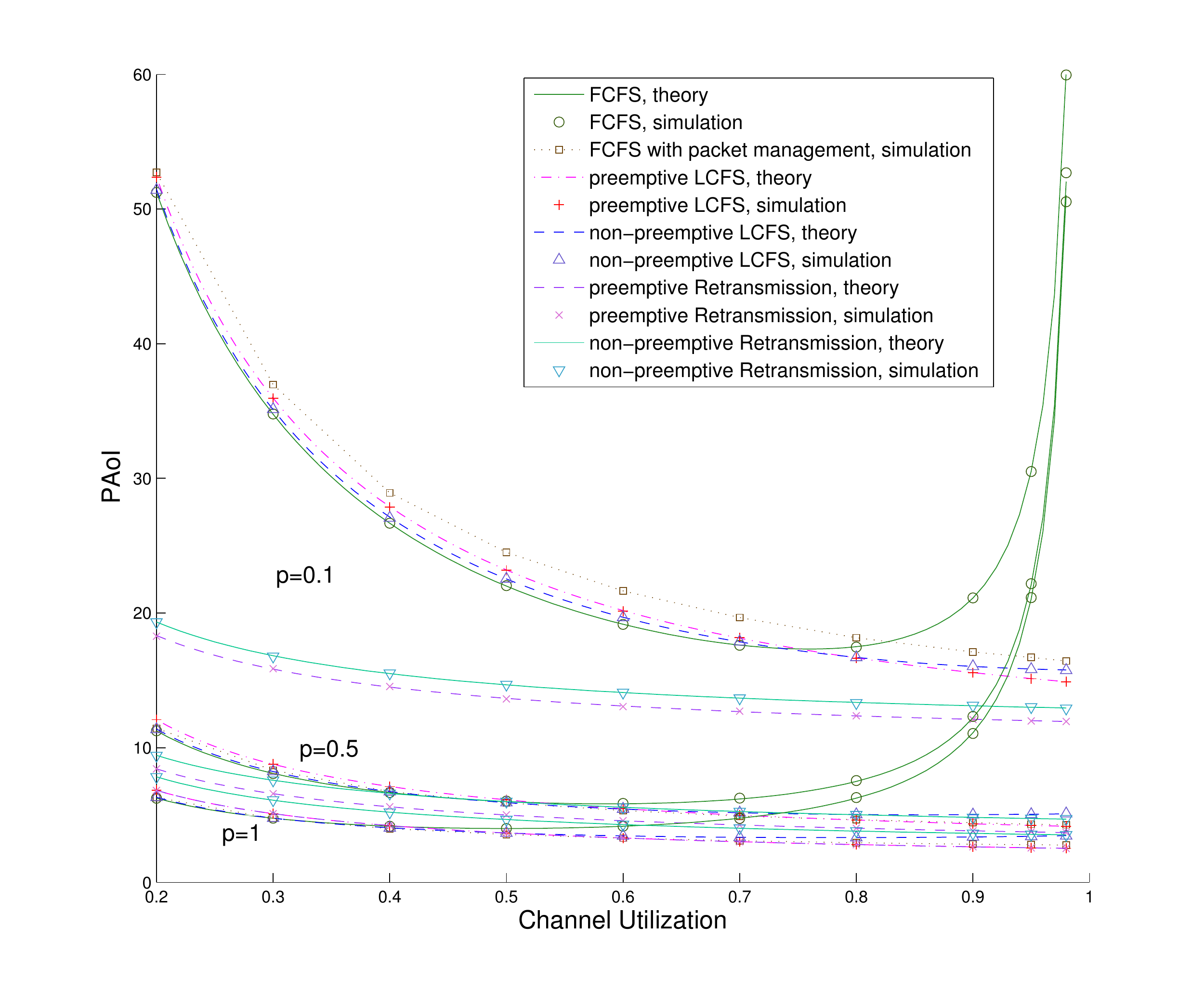}
\vspace{-.1in}
\caption{PAoI in different queueing  systems with packet loss.}
\label{fig_sim}
\vspace{-.1in}
\end{figure}

From Fig. \ref{fig_sim}, we see that the simulation results  match our theoretical results very well. 
We can see that when channel utilization is high, the PAoI under FCFS becomes very large due to large queueing delay, while other policies effectively avoid this problem. 
On the other hand, when packet loss rate is high, FCFS with packet management suffers from the lack of packet deliveries but LCFS again ensures a low PAoI, matching our intuition about the benefits of LCFS.
Moreover, retransmission policies have significant reductions on PAoI compared to other policies when packet loss rate is high. But when packet loss rate is low, Retransmission with non-preemptive priority suffers a performance loss since retransmissions can also block later packets.

\section{Conclusion}\label{section:conclusion}
 We consider the peak age-of-information (PAoI) in an $M/M/1$ queueing  system with packet delivery failure, a setting that  models  real-world situations with transmission errors. 
We derive exact PAoI expressions under different scheduling policies, including FCFS, LCFS with preemptive priority, LCFS with non-preemptive priority, Retransmission with preemptive priority, and Retransmission with non-preemptive priority. Our analytical and simulation results show that the LCFS principle as well as retransmissions can successfully avoid increments in PAoI resulting from large queueing delay and packet loss.


\section*{Acknowledgment}

The authors would like to thank Prof. Eytan Modiano at MIT for the motivating discussions and valuable comments.  

This work  was supported in part by the National Basic Research Program of China Grant 2011CBA00300, 2011CBA00301, the National Natural Science Foundation of China Grant 61361136003, 61303195, Tsinghua Initiative Research Grant, Microsoft Research Asia Collaborative Research Award, and the China Youth 1000-talent Grant.



%

\end{document}